\newtheorem{prop}{Lemma}
\newtheorem{thm}{Theorem}
\newtheorem{cor}{Remark}
\newcommand\blfootnote[1]{%
		\begingroup
		\renewcommand\thefootnote{}\footnote{#1}%
		\addtocounter{footnote}{-1}%
		\endgroup
}
\title{Handover-Count based Velocity Estimation of Cellular-Connected UAVs}
\author{
   \IEEEauthorblockN{Md Moin Uddin Chowdhury, Priyanka Sinha, and Ismail G{\"{u}}ven{\c{c}}}
    \IEEEauthorblockA{Department of Electrical and Computer Engineering, North Carolina State University, Raleigh, NC, USA\\
        \{mchowdh,psinha2,iguvenc\}@ncsu.edu}
}
\begin{document}
\pdfoutput=1
\maketitle
\begin{abstract}
Cellular-connected unmanned aerial vehicles (UAVs) are expected to play a major role in various civilian and commercial applications in the future. While existing cellular networks can provide wireless coverage to UAV user equipment (UE), such legacy networks are optimized for ground users which makes it challenging to provide reliable connectivity to aerial UEs.  
To ensure reliable and effective mobility management for aerial UEs, estimating the velocity of cellular-connected UAVs carries critical importance. In this paper, we introduce an approximate probability mass function (PMF) of handover count (HOC) for different UAV velocities and different ground base station (GBS) densities. Afterward, we derive the Cramer-Rao lower bound (CRLB) for the velocity estimate of a UAV, and also provide a simple unbiased estimator for the UAV's velocity which depends on the GBS density and HOC measurement time. Our simulation results show that the accuracy of velocity estimation increases with the GBS density and HOC measurement window. Moreover, the velocity of commercially available UAVs can be estimated efficiently with reasonable accuracy. 


\end{abstract}
\begin{IEEEkeywords}
3GPP, antenna radiation, Cramer-Rao lower bound, unmanned aerial vehicle (UAV), velocity estimation.
\end{IEEEkeywords}
\section{Introduction}
\blfootnote{This work is supported by NSF grant CNS-1453678.}
Thanks to their flexibility in deployment as well as low production cost, using unmanned aerial vehicles (UAVs) or drones for a wide range of commercial and civilian applications have gained significant interest in recent years~\cite{rui1,geraci_2018,halim_mobility}. For taking the full advantage of UAV deployment, beyond visual line of sight (BVLOS) operations are of critical importance where UAVs can fly autonomously without direct human control. Existing cellular networks can be a strong candidate for deploying autonomous UAVs in BVLOS scenarios, in which the UAVs act as aerial users and can maintain reliable communication for safety and control purposes with the ground base stations (GBSs) in the downlink~\cite{geraci_2018}. For maintaining reliable and good connection quality at the cellular-connected UAVs, effective mobility management (MM) by minimizing handover (HO) failures, radio link failure, as well as unnecessary HOs is critically important. However, due to being served by sidelobes of the GBS that provide lower antenna gains, a UAV might be connected with a GBS located far from it~\cite{geraci_2018,ramy_antenna}. This phenomenon, in turn, makes the reference signal received power (RSRP) based MM of cellular-connected UAVs extremely challenging.

Velocity estimation of a cellular-connected UAV can play an important role in effective MM. This information, in turn, can help efficient resource scheduling, load balancing, and energy efficiency enhancements~\cite{arvind_ho}. Especially, due to the patchy signal coverage of GBS in the sky, a high UAV velocity indicates that the UAV in interest will be associated with a GBS for a brief amount of time. Moreover, UAVs flying at high altitudes suffer from high interference stemming from nearby GBS due to the near free-space path-loss trend in the GBS-to-UAV link~\cite{lin_sky}. Estimating the mobile UAV velocity will enable the GBSs to coordinate among themselves for leveraging the inter-cell interference coordination scheme as done for ground users in~\cite{david2012}. While the global positioning system (GPS) can be used to estimate velocity, it is not a practical solution for power-limited cellular-connected UAVs, since GPS receivers consume a significant amount of power. 

Existing cellular networks can estimate the mobility state of a user into three classes: low, medium, and high mobility~\cite{david2012}. In \cite{arvind_ho}, authors presented approximate probability mass functions (PMFs) for HOC and based on them proposed an efficient estimator of ground user velocity. Using tools from stochastic geometry, analytical studies for HO-rate in typical cellular networks are conducted in~\cite{lin_mobility}, while authors in~\cite{bao_mobility} also considered the presence of small base stations along with GBSs. However, none of these prior works took the MM of aerial users into account.

Research directions in integrating UAVs into existing cellular networks as aerial user equipment (UE) have recently attracted substantial attention. For instance, real-world experiments were conducted to test the feasibility of integrating UAVs as UE in \cite{lin_sky,lin_field,denmark_uav_test}. The Third Generation Partnership Project (3GPP) also studied the challenges in providing reliable UAV mobility support in~\cite{3gpp}. In~\cite{Ramy_walid_ICC2020}, the authors explored the effects of practical antenna configurations on the MM of cellular-connected UAVs. By leveraging machine learning algorithms, works in \cite{martins2019,lin_rogue_2019}, studied the problem of detecting UAVs based on radio signals. However, none of these prior works considered the problem of estimating the velocity of cellular-connected UAVs. To the best of our knowledge, this is the first attempt to estimate UAV velocity in a realistic cellular network based on handover count (HOC) statistics.

Our main contribution in this paper is a novel and efficient HOC based UAV velocity estimation technique while considering realistic GBS antenna radiation pattern~\cite{rebato_antenna} and HO scenario~\cite{3gpp.36.331}, and 3GPP specified GBS-to-UAV path loss model~\cite{3gpp}. We also consider the presence of correlated shadowing~\cite{gudmundson_crr_shad} on the UAV trajectory. Through extensive Matlab simulation, we approximate the HOC probability mass function (PMF) using Poisson distribution and then by using the Matlab \textit{curve fitting toolbox}, we express the Poisson PMF parameter with respect to (w.r.t.) different UAV velocities and GBS densities. Using the approximated PMF, an expression for the  Cramer-Rao lower bound (CRLB) of the estimated velocity is derived. Moreover, we provide an efficient minimum variance unbiased (MVU) velocity estimator whose accuracy coincides with the CRLB. Finally, we investigate the accuracy of the estimator for various UAV velocities, GBS densities, and HOC measurement time intervals. 

The rest of this paper is organized as follows. Section~\ref{sec:sys} describes the system model for HO PMF calculation. The approximation of the HOC PMF is presented in Section~\ref{sec:stat}. We derive the CRLB for UAV velocity estimation and provide an efficient unbiased velocity estimator in Section~\ref{sec:crlb}. Simulation results are presented in Section \ref{sec:simulation}. Finally, Section~\ref{sec:Conc} concludes this paper.

\section{System Model}
\label{sec:sys}
\subsection{Network Model}
Let us consider a cellular network in which a single UAV acting as an aerial user, is flying along a two dimensional (2D) linear trajectory (for instance, through the horizontal X-axis) at a fixed height $h_{\textrm{UAV}}$ and velocity $v$. We consider the linear mobility model due to its simplicity and suitability for UAVs flying in the sky with virtually no obstacle. While flying, we assume that the network can track the number of HOs $H$ made by the UAV during a measurement time window $T$. We denote the distance travelled during this measurement duration as $d=vT$. We present the HO procedure later in this Section.

The underlying cellular network consists of GBSs that are deployed with homogeneous Poisson point process (HPPP) $\Phi$ of intensity $\lambda_{\text{GBS}}~ \text{GBSs/km}^2$, and all GBSs have similar height $h_{\textrm{GBS}}$ and transmission power $P_{\textrm{GBS}}$. 
Each GBS consists of three sectors separated by $120$\degree,~while each sector is equipped with $8 \times 1$ cross-polarized antennas downtilted by $\theta_D$. The radiation pattern of each single cross polarized antenna element consists of both horizontal and vertical radiation patterns and these radiation patterns $A_\text{{E,H}}(\phi)$ and $A_\text{{E,V}}(\theta)$ are obtained as \cite{rebato_antenna},\cite{geraci_2018}:
 \begin{align}
   A_{\mathrm{E,H}}(\phi) &= - \text{min}\left\{ 12\left(\frac{\phi}{\phi_{3\mathrm{dB}}} \right)^2, \mathrm{A_m}\right\}, \\
   A_{\mathrm{E,V}}(\theta) &= - \text{min}\left\{ 12\left(\frac{\theta-90}{\theta_{3\mathrm{dB}}} \right)^2, \mathrm{SLA_V}\right\},
\end{align}
where $\phi \in [0^\circ,360^\circ]$,  $\theta \in [0^\circ,180^\circ]$, $\phi_{3\mathrm{dB}}$ and $\theta_{3\mathrm{dB}}$ are $3$ dB beamwidth with similar value of $65\degree$, $\mathrm{A_m}$ and $\mathrm{SLA_V}$ are front-back ratio and side-lobe level limit, respectively, with identical value of 30 dB. Then the 3D antenna element radiation pattern for each pair of $(\theta,\phi)$ can be expressed as:
\begin{equation}
A_{\mathrm{E}}(\theta,\phi)= G_{\mathrm{max}} - \text{min}\left\{-[A_{\mathrm{E,H}}(\phi)+A_{\mathrm{E,V}}(\theta)],\mathrm{A_m}\right\}.
\label{eq_element}
\end{equation}\par

The array radiation pattern with a given element radiation pattern from \eqref{eq_element} can be calculated as, $A_{\mathrm{A}}(\theta,\phi)=  A_{\mathrm{E}}(\theta,\phi)+\text{AF}(\theta,\phi,n).$ The term $\text{AF}(\theta,\phi,n)$ is the array factor with the number $n$ of antenna elements, given as:
\begin{equation}
\text{AF}(\theta,\phi,n)=10\log_{10}\big [1+\rho \big(|\mathbf{a}~.~\mathbf{w}^T|^2-1 \big)\big],
\end{equation}
where $\rho$ is the correlation coefficient, set to unity. The term $\mathbf{a}$ $\in \mathbb{C}^{n}$ is the amplitude vector set as 
$1/\sqrt{n}$. The term $\mathbf{w}$ $\in \mathbb{C}^{n}$ is the beamforming vector, which can be expressed as:
\begin{equation}
  \mathbf{w} = [w_{1,1}, w_{1,2}, . . . , w _{m_V,m_H}],  
\end{equation}
where~ $m_Vm_H=n$, $w_{p,r}=e^{j2\pi \big((p-1)\frac{\Delta V}{\lambda} \Psi_p +(r-1)\frac{\Delta H}{\lambda} \Psi_r \big)}$, \\$\Psi_p = \cos({\theta})-\cos({\theta_D})$, and $\Psi_r = \sin({\theta})\sin({\phi})-\sin({\theta_D})\sin({\phi_D})$. $\Delta V$ and $\Delta H$ stand for the spacing distances between the vertical and horizontal elements of the antenna array, respectively. We consider $\Delta V=\Delta H=\frac{\lambda}{2}$, where $\lambda$ represents the wavelength of carrier frequency $\text{f}_c$. 
 
We assume that the UAV is equipped with an omnidirectional antenna and the UAV is capable of mitigating the Doppler effect~\cite{rui1}. 

\subsection{Path-loss Model}
For modeling the path-loss between a GBS and the UAV, we consider the RMa-AV-LoS channel model specified by 3GPP\cite{3gpp}. The instantaneous path-loss (in dB) under a line-of-sight (LOS) scenario between GBS $m$ and the UAV can be expressed as:
\begin{align}
\xi_{{m,u}}^{\text{LOS}}(t)&=\text{max}\big(23.9-1.8\log_{10}(h_{\textrm{UAV}}),20 \big)\log_{10}(d_{{m,u,t}})\nonumber\\
&+20\log_{10} \bigg(\frac{40\pi \text{f}_c}{3} \bigg)+\chi_{\text{LOS}},
\end{align}
where  $h_{\textrm{UAV}}$ is between 10 m to 300 m and $\text{f}_c$ is the carrier frequency, while $d_{{m,u,t}}$ represents the 3D distance between the UAV and GBS $m$ at time $t$. $\chi_{\text{LOS}}$ represents the correlated shadow fading (SF) associated with LOS scenario~\cite{Goldsmith:2005:WC:993515}. It is worth noting that the probability of LOS is equal to one if the UAV height falls between 40 m and 300 m \cite{3gpp}.

SF is typically modeled as an independent Gaussian random variable with zero mean and standard deviation $\sigma$. According to \cite{3gpp}, $\sigma$ (in dB) can be expressed as $ \sigma= 4.2\exp({-0.0046~{h}_{\text{UAV}}})$.
However, the SF values of consecutive waypoints of a UAV trajectory might have non-trivial correlation due to high probability of LOS in the GBS-to-UAV link. Hence, we consider that SF is a first-order auto-regressive process~\cite{correlated_shadowing}, where the auto-correlation between the SF values at two points separated by distance $\Delta$ is given by \cite{Goldsmith:2005:WC:993515}, $R(\Delta)=\sigma^2  \beta^{\frac{\Delta}{X_c}}$,
where, $\Delta$ is the distance between the two points, $ \beta$ is the correlation coefficient, and $X_c$ is the decorrelation distance \cite{Goldsmith:2005:WC:993515}. Here, we set $X_c=100$ m and $\beta=0.82$~\cite{Goldsmith:2005:WC:993515}. We first generate independent SF values with zero mean and $\sigma=1$ for each waypoint and then use Cholesky factorization to generate the correlated SF values from $R$~\cite{Klingenbrunn_corr_shadow}. Then, the received power at the UAV from GBS $m$ at time instance $t$ can be expressed as: 
\begin{equation}
P_{rx-m}=P_{\text{GBS}}+ A_{\mathrm{A}}(\theta_{m,t},\phi_{m,t},n), -\xi_{{m,u}}^{\text{LOS}}(t)~,
\label{eq:rsrp}
\end{equation}
where $\theta_{m,t}$ and $\phi_{m,t}$ are the elevation and the azimuth angles, respectively, between the UAV and GBS $m$ at time $t$.
\subsection{Handover Procedure}

The UAV will measure the RSRPs from all the adjacent GBSs at subsequent measurement gaps using \eqref{eq:rsrp}. Here, we consider a HO mechanism that involves a HO margin (HOM) parameter, and a time-to-trigger (TTT) parameter, which is a time window which starts after the following HO condition (A3 event \cite{3gpp.36.331}) is
fulfilled:
\begin{equation}
\text{RSRP}_{{j}}>\text{RSRP}_{{i}}+m_{\text{hyst}},
\label{a3_event}
\end{equation}
where $\text{RSRP}_{{j}}$ and $ \text{RSRP}_{{i}}$ are the RSRPs measured from the serving GBS $i$ and target GBS $j$, respectively and $m_{\text{hyst}}$ is the HOM set by the network operator. The UAV does not transmit its measurement report to its current serving GBS before the TTT expires~\cite{karthik2017}.

\begin{figure}[t]
\centering
	\subfloat[$\lambda_{\text{GBS}}=2$]{
			\includegraphics[width=.48\linewidth]{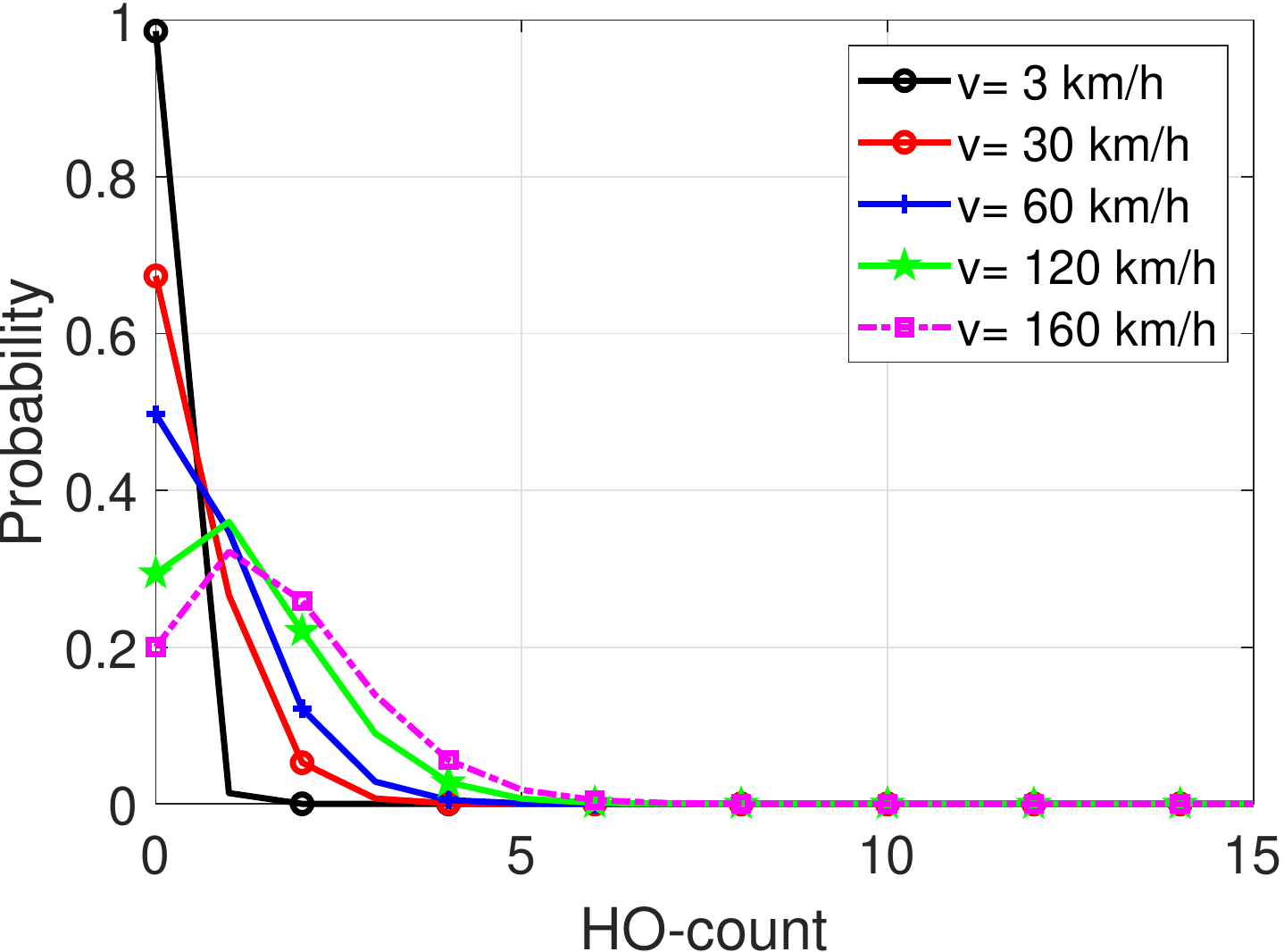}}
		\subfloat[$\lambda_{\text{GBS}}=10$]{
			\includegraphics[width=.48\linewidth]{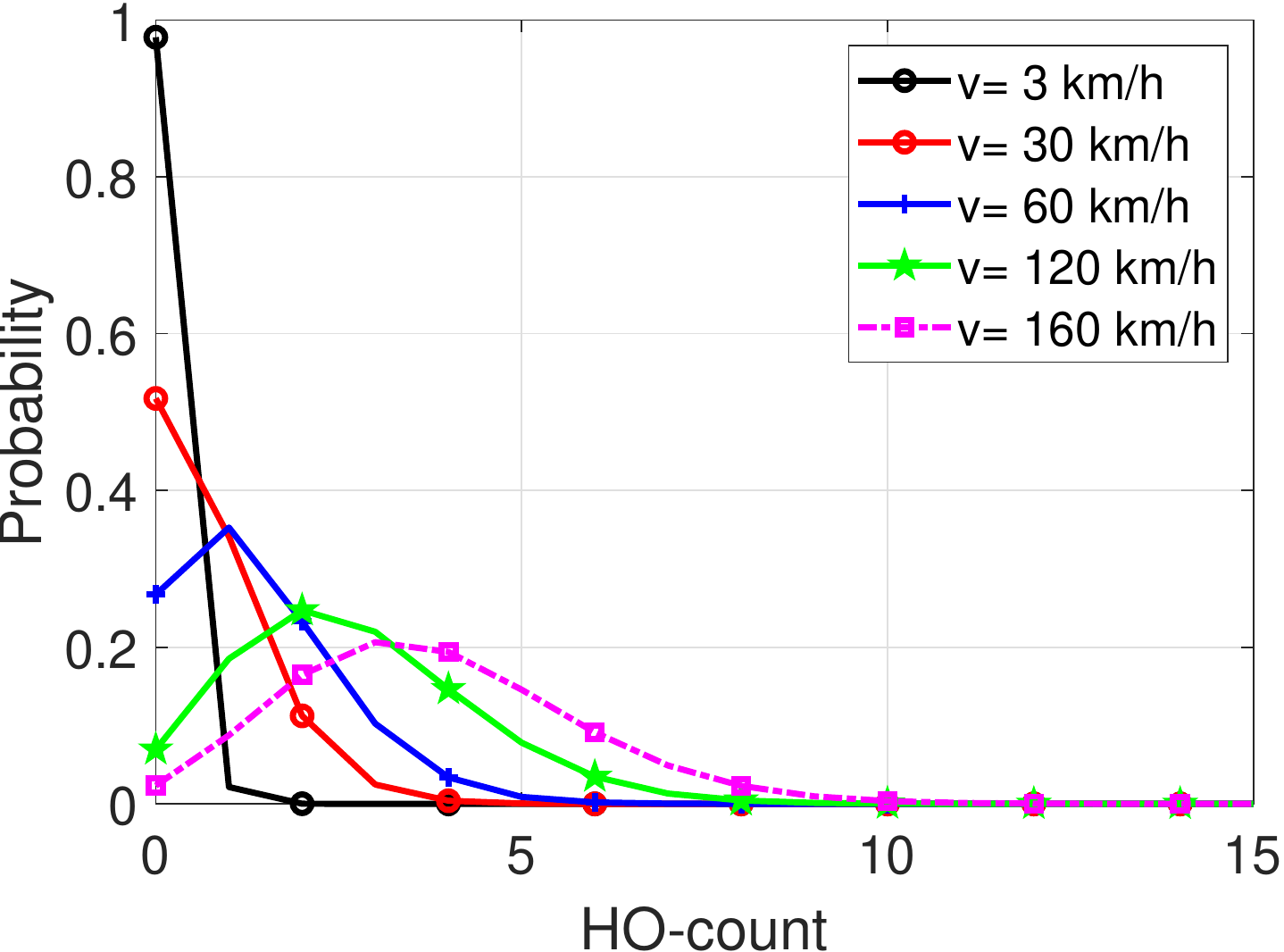}}
    \caption{{PMF of HOC for different $\lambda_{\text{GBS}}$ and $v$ considering SF.}}
    \label{fig:uav_160ms_possion_fit}
    \vspace{-2mm}
\end{figure}

\section{Handover-count Statistics}
\label{sec:stat}
For obtaining the estimated velocity of a UAV based on its HOC, we need to know the HOC PMF $f_H(h)$. To the best of our knowledge, there exists no expression for the PMF of HOC of a cellular-connected UAV. Moreover, due to the intractability in GBS antenna radiation pattern, HO process, and channel models considered in this research, it is extremely difficult to obtain an exact expression of $f_H(h)$.

In Fig.~\ref{fig:uav_160ms_possion_fit}, we plot the HOC PMFs from extensive Matlab simulations for various $v$ and $\lambda_{\text{GBS}}$ values. For each combination of $v$ and $\lambda_{\text{GBS}}$, we obtained 1000 samples of HOC $H$ for constructing the PMF $f_H(h)$. Here, we consider the HOC measurement time interval $T=100$~s. For low values of $\lambda_{\text{GBS}}$, as depicted in Fig.~\ref{fig:uav_160ms_possion_fit}(a), the PMFs for different UAV velocities are overlapping significantly. For higher values of $\lambda_{\text{GBS}}$, the PMFs still overlaps, but they are more spread out which will lead to more accurate velocity estimation. From the obtained HOC data samples we have noticed that the PMFs for different $v$ and $\lambda_{\text{GBS}}$ resembles Poisson distribution.
Hence, we model the PMF of HOC of a UAV flying with a constant velocity with rare parameter $\lambda > 0$ as:
\begin{equation}
f_H^p(h)= \frac{e^{-\lambda}\lambda^h}{h!}.   
\label{eq:PMF}
\end{equation}

For obtaining analytical expression, we express $\lambda$ as a function of distance $d=vT$ and $\lambda_{\text{GBS}}$. Using the MATLAB curve fitting toolbox, we obtain a two-dimensional power fit where the value of
$\lambda$ can be obtained as, $\lambda=a\times\lambda_{\text{GBS}}^b \times (vT)$. For $\text{TTT} = 160$ ms and $m_{\text{hyst}}= 3$ dB, we report the following values of the parameters as, $a=0.2417$ and $b=0.5278$.
In Fig.~\ref{fig:ho-count_fit}, we show the trend of $\lambda$ with respect to (w.r.t) $d$ and $\lambda_{\text{GBS}}$. Since the HOC follows a Poisson distribution, the expected value $E(h)$ and variance $\text{var}(h)$ is $\lambda$.


\begin{figure}[t]
\centering{\includegraphics[width=1\linewidth]{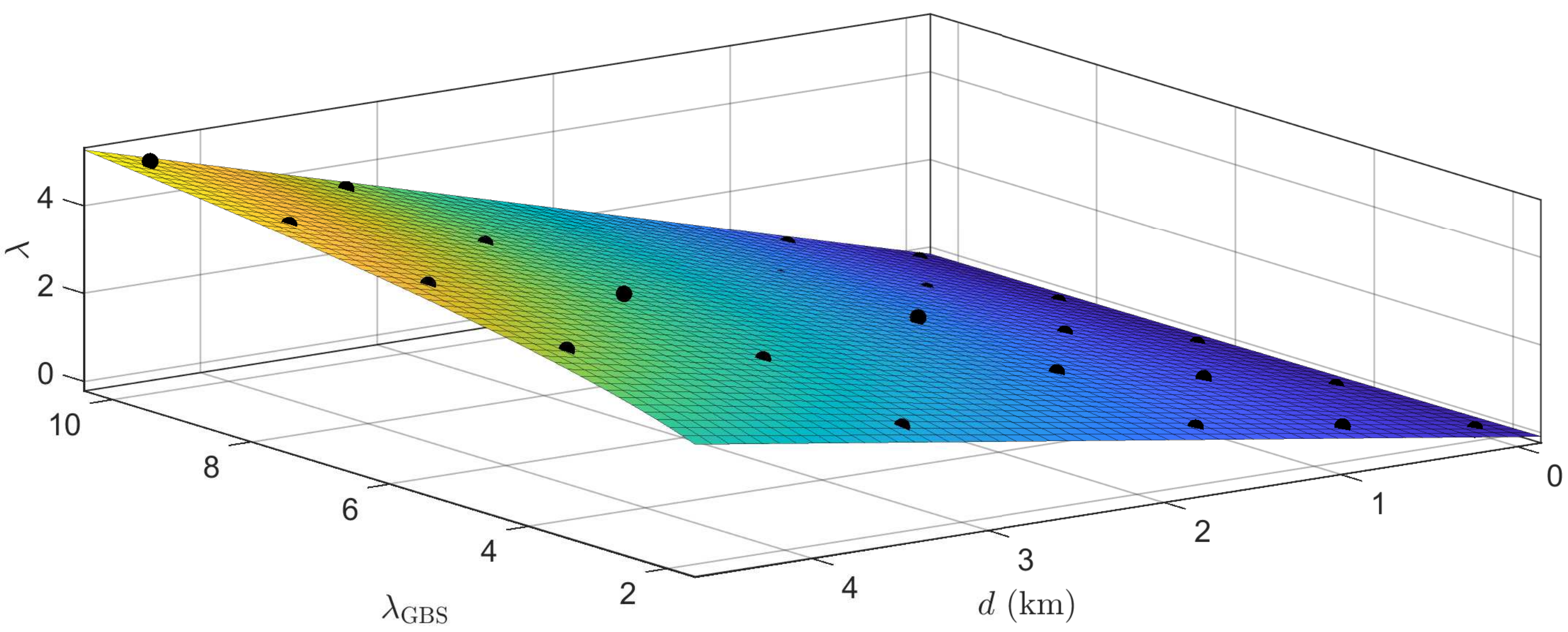}}
\caption{Fitting of the HOC PMF parameter $\lambda$, w.r.t. covered distance $d$ and GBS density $\lambda_{\text{GBS}}$.}
\label{fig:ho-count_fit}
\vspace{-2mm}
\end{figure}

\section{Cramer-Rao Lower Bound for UAV Velocity Estimation}
\label{sec:crlb}
Cramer-Rao lower bound (CRLB) provides a lower bound on the variance of an unbiased estimator. An estimator is considered to be unbiased if the expected value of the estimates coincides with the true value of the parameter of interest. If the variance of an unbiased estimator can achieve the CRLB, it is then said to be an efficient estimator\cite{book:Kay97,arvind_ho}. First, we need to meet the regularity condition~\cite{book:Kay97} for obtaining the CRLB as discussed in the following lemma.

\begin{prop}
The proposed PMF of HOC based on UAV velocity satisfies the regularity condition for obtaining CRLB.
\end{prop}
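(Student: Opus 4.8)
The plan is to verify the regularity condition directly, namely that $E\!\left[\partial \ln f_H^p(h)/\partial v\right]=0$ for every admissible velocity $v$, which is the prerequisite \cite{book:Kay97} for the CRLB derivation carried out in the remainder of this section. First I would collect the factors that do not depend on $v$ into a single constant: defining $c = a\,\lambda_{\text{GBS}}^{\,b}\,T$, the fitted parameter from Section~\ref{sec:stat} reads $\lambda = c\,v$, so that by \eqref{eq:PMF} the log-likelihood is $\ln f_H^p(h) = -cv + h\ln(cv) - \ln(h!)$. Because the support $\{0,1,2,\dots\}$ of the Poisson PMF does not depend on $v$, and $\lambda = cv>0$ is a smooth function of $v$ over the entire parameter range, $\ln f_H^p(h)$ is differentiable in $v$ for each fixed $h$.

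Second, I would compute the score function by termwise differentiation, obtaining $\partial \ln f_H^p(h)/\partial v = -c + h/v$, which is finite for all $h\in\{0,1,2,\dots\}$ and all $v>0$. Taking the expectation over $H$ and invoking the Poisson moment identity $E(h)=\lambda=cv$ noted at the end of Section~\ref{sec:stat}, I get $E\!\left[-c + h/v\right] = -c + E(h)/v = -c + cv/v = 0$, which is precisely the regularity condition.

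The one technical point worth spelling out — and the closest thing to an obstacle here — is the justification for exchanging differentiation with the infinite summation hidden inside the expectation, i.e. that $\sum_{h\ge 0}\partial_v f_H^p(h) = \partial_v \sum_{h\ge 0} f_H^p(h) = \partial_v 1 = 0$. This holds because, for fixed $h$, $f_H^p(h)$ is real-analytic in $\lambda$ (hence in $v$) and the series $\sum_{h\ge 0} e^{-\lambda}\lambda^h/h!$ converges uniformly on compact intervals of $\lambda$; termwise differentiation of a power series inside its radius of convergence (equivalently, a dominated-convergence argument) then legitimizes the interchange. Once that is in place, the computation above is rigorous and the lemma follows.
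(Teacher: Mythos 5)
Your proposal is correct and follows essentially the same route as the paper: you compute the score $-c + h/v$ (the paper writes it as $-K(1-h/\lambda)$ with $K=a\lambda_{\text{GBS}}^{b}T$, which is the same expression) and use $E(h)=\lambda=Kv$ to conclude the expectation vanishes. The only addition is your justification for interchanging differentiation and summation, which the paper omits but which does not change the argument.
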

\begin{proof} We need to show that $E(\frac{\partial \log f_H^p(h;v)}{\partial v})= 0$. First we take the logarithm of $f_H^p(h;v)$ and then differentiate it w.r.t $v$, which can be expressed as follows:  
\begin{eqnarray}
\frac{\partial \log f_H^p(h;v)}{\partial v}= -K \bigg (1-\frac{h}{\lambda}\bigg ),
\label{eq:log_derivative_v}
\end{eqnarray}
where $K=\frac{\lambda}{v}=a\times \lambda_{\text{GBS}}^b \times T$. Finally we take the expectation w.r.t $f_H^p(h;v)$ which can be expressed as:
\begin{align}
E(\frac{\partial \log f_H^p(h;v)}{\partial v}) &= -K+\frac{E(h)}{\lambda}K,\\&=-K+\frac{\lambda}{\lambda}K=0,
\label{eq:regularity}
\end{align}
which completes the proof.
\end{proof}

Next, we present the expression of CRLB by considering the PMF presented in \eqref{eq:PMF}.
\begin{thm}
Let a UAV is flying over a cellular network with GBS density $\lambda_{\text{GBS}}$ per at a fixed height over a linear trajectory and make $H$ handovers at a time period $T$. If the PMF of the HOC can be expressed as $f_H^p(h;v)$ as in \eqref{eq:PMF}, then the CRLB of the estimated velocity is given by,  
\begin{equation}
    \emph{var}(\hat{v}) \geq \frac{v}{K},
    \label{eq:crlb_thm}
\end{equation}
where $K=a\times \lambda_{\text{GBS}}^b \times T$.
\end{thm}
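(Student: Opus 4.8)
The plan is to apply the Cramér–Rao inequality directly. Since the regularity condition has just been verified in the preceding lemma, the variance of any unbiased estimator $\hat v$ obeys $\text{var}(\hat v)\geq 1/I(v)$, where $I(v)=-E\big[\partial^2 \log f_H^p(h;v)/\partial v^2\big]$ is the Fisher information of the single HOC observation $H$. Thus the whole proof reduces to evaluating $I(v)$ for the Poisson model \eqref{eq:PMF} under the fitted relation $\lambda=a\lambda_{\text{GBS}}^b(vT)=Kv$.

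First I would reuse the first-order log-derivative already computed in \eqref{eq:log_derivative_v}. Substituting $\lambda=Kv$ puts it in the especially simple form $\partial \log f_H^p(h;v)/\partial v=-K+h/v$. Differentiating once more with respect to $v$ then gives $\partial^2 \log f_H^p(h;v)/\partial v^2=-h/v^2$. The only delicate point here is to carry the dependence $\lambda=Kv$ correctly through the chain rule when passing from the first to the second derivative.

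Next I would take the expectation with respect to $f_H^p(h;v)$. Because the HOC is Poisson with mean $E(h)=\lambda$, one obtains $I(v)=-E\big[-h/v^2\big]=E(h)/v^2=\lambda/v^2=Kv/v^2=K/v$. Inserting this into the Cramér–Rao bound yields $\text{var}(\hat v)\geq 1/I(v)=v/K$ with $K=a\lambda_{\text{GBS}}^b T$, which is precisely \eqref{eq:crlb_thm}.

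The computation is short, so there is no genuine obstacle; the one place to be careful is the differentiation step, since $\lambda$ is itself a function of $v$ and a sloppy chain rule would corrupt the Fisher information. As an independent cross-check I would also compute $I(v)=E\big[(\partial \log f_H^p(h;v)/\partial v)^2\big]=K^2\,E\big[(1-h/\lambda)^2\big]$ and use $E\big[(1-h/\lambda)^2\big]=\text{var}(h)/\lambda^2=1/\lambda$ for the Poisson law, which again gives $I(v)=K^2/\lambda=K/v$ and confirms the stated bound.
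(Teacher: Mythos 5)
Your proposal is correct and reaches the bound by essentially the same route as the paper: the paper computes the Fisher information as $E\big[(\partial \log f_H^p(h;v)/\partial v)^2\big]=K^2 E\big[(1-h/\lambda)^2\big]=K^2/\lambda=K/v$, which is exactly your cross-check at the end. Your primary computation via the equivalent second-derivative form $I(v)=-E[\partial^2 \log f_H^p(h;v)/\partial v^2]=E(h)/v^2=K/v$ is a standard, valid alternative and handles the $\lambda=Kv$ dependence correctly, so there is no gap.
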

\begin{proof}
By definition of CRLB, we know that 
\begin{equation}
     \text{var}(\hat{v}) \geq \frac{1}{I(v)},
\label{eq:crlb_fisher}
\end{equation}
where $I(v)$ is the \textit{Fisher Information} and can be expressed as:
\begin{equation}
I(v)=\mathop{\mathbb{E}}\Bigg [\bigg(  \frac{\partial \log f_H^p(h;v)}{\partial v}   \bigg )^2 \Bigg].
\end{equation}
Here, $\mathop{\mathbb{E}}[\cdot]$ is the expectation operator w.r.t $H$. By squaring \eqref{eq:log_derivative_v}, and taking the expectation w.r.t $f_H^p(h;v)$, we get:  
\begin{align}
 I(v) &=
\mathop{\mathbb{E}}\Bigg[ K^2(1+\frac{h^2}{\lambda^2} - \frac{2h}{\lambda} ) \Bigg],\\
&= K^2 \bigg(1+\frac{E(h^2)}{\lambda^2} - \frac{2E(h)}{\lambda} \bigg).
\label{eq:CRLB_proof}
\end{align}

Since $h$ follows Poisson distribution with parameter $\lambda$, $E(h^2)$ can be written as $\lambda (1+\lambda)$. Hence, we get: 
\begin{align}
 I(v) &=
K^2\bigg(1+\frac{\lambda (1+\lambda)}{\lambda^2} - \frac{2\lambda}{\lambda} \bigg)= \frac{K^2}{\lambda}=\frac{K}{v}.
\label{eq:CRLB_proof_part2}
\end{align}
By placing the expression of $I(v)$ in \eqref{eq:crlb_fisher}, we can obtain the CRLB for $\hat{v}$ as in \eqref{eq:crlb_thm}.\end{proof}
\begin{cor}
Variance of the estimated velocity $\hat{v}$ increases linearly with the UAV velocity and decreases with increasing HOC duration $T$ and GBS density $\lambda_{\text{GBS}}$. 
\end{cor}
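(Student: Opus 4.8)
The plan is to read off the three monotonicity claims directly from the closed-form CRLB established in Theorem~1. First I would substitute $K=a\,\lambda_{\text{GBS}}^b\,T$ into \eqref{eq:crlb_thm}, obtaining $\text{var}(\hat v)\ge v/(a\,\lambda_{\text{GBS}}^b\,T)$, and then invoke the efficient MVU estimator announced in Section~\ref{sec:crlb}, whose variance meets this bound with equality. This lets me treat $v/(a\,\lambda_{\text{GBS}}^b\,T)$ as the actual attainable estimator variance, not merely a lower bound, so statements about how ``the variance'' behaves are meaningful.

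Next I would examine the dependence on each of the three quantities with the other two held fixed. As a function of $v$ alone, the right-hand side is $c\,v$ with $c=1/(a\,\lambda_{\text{GBS}}^b\,T)>0$, i.e.\ strictly increasing and in fact linear in $v$, which gives the first claim. As a function of $T$ alone, it is $c'/T$ with $c'=v/(a\,\lambda_{\text{GBS}}^b)>0$, hence strictly decreasing in $T$. As a function of $\lambda_{\text{GBS}}$ alone, it is proportional to $\lambda_{\text{GBS}}^{-b}$; since the fitted exponent is positive (for instance $b=0.5278$ in the reported configuration of Section~\ref{sec:stat}), $\lambda_{\text{GBS}}^{-b}$ is strictly decreasing in $\lambda_{\text{GBS}}$, which gives the third claim.

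The only point requiring any care is the sign of the fitted exponent $b$: the decreasing-in-$\lambda_{\text{GBS}}$ statement relies on $b>0$, which is supplied by the curve fit in Section~\ref{sec:stat} rather than derived. Everything else is immediate from \eqref{eq:crlb_thm} and the positivity of $a$, $T$, $\lambda_{\text{GBS}}$, and $v$, so I do not expect a genuine obstacle here; the remark is essentially a qualitative reading of the CRLB expression, and the proof can be kept to a few lines.
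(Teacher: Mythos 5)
Your proposal is correct and matches the paper's (implicit) justification: the remark is stated without proof as an immediate reading of the CRLB $v/K$ with $K=a\,\lambda_{\text{GBS}}^b\,T$, which the MVU estimator attains with equality. Your extra care in invoking the efficient estimator so that ``the variance'' is well-defined, and in flagging that $b>0$ comes from the curve fit, is sound but does not change the argument.
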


\begin{table}[t]
\centering
\renewcommand{\arraystretch}{1}
\caption {Simulation parameters.}
\label{Tab:Sim_par}
\scalebox{0.99}
{\begin{tabular}{lc}
\hline
Parameter & Value \\
\hline
$P_{\text{GBS}}$ & 46 dBm  \\ 
$\theta_D$ & $6^\circ$ \\ 
$h_{\text{UAV}}$ & 120 m\\ 
$h_{\text{GBS}}$ & 35 m\\ 
${\text{f}_c}$ & 1.5 GHz\\ 
$\lambda$\textsubscript{GBS} & 2, 4, 6, 8, and 10 per $\text{km}^2$\\ 
$v$ & 3 , 30, 60, 120, and 160 kmph\\
$\text{measurement gap}$ & 200 ms\\
TTT, $m_{\text{hyst}}$ & 160~ms and 3 dB\\
\hline
\end{tabular}}
\end{table}

\subsection{Minimum Variance Unbiased Estimator for UAV Velocity}

The derived CRLB for $\hat{v}$ takes the HOC as input in the closed form. In this sub-section, we will derive the variance of this estimator $\hat{v}$ and show that this is indeed an MVU estimator. 
We first consider the Rao-Blackwll-Lehmann-Scheffe (RBLS) theorem to find the MVU velocity estimator \cite[Section~5.5]{book:Kay97}. According to Neyman-Fisher factorization theorem if we can factorize the PMF $f_H^p(h;v)$ as
\begin{equation}
    f_H^p(h;v)=g(F(h),v)r(h),
\end{equation}
where $g$ is function that depends on $h$ only by $F(h)$, then we can conclude that $F(h)$ is a sufficient statistics of $v$\cite[Section~5.4]{book:Kay97}. We can factor the HOC PMF for our case as: 
\begin{equation}
    f_H^p(h;v)=\underbrace{(e^{-\lambda}\lambda^h)}_{g(F(h),v)}~\times~\underbrace{\frac{1}{h!}}_{r(h)}~.
\end{equation}

Hence, $F(h)=h$ is a sufficient statistics of $v$. Since $E(h)=\lambda=Kv$, we can formulate an estimator of $v$ as, $\hat{v}=\frac{h}{K}$. We can calculate the mean of this estimator as:
\begin{equation}
   E(\hat{v})=\frac{E(h)}{K}=\frac{\lambda}{K}=v~,
\end{equation}
which shows that $\hat{v}$ is an unbiased estimator of $v$. Next, we derive the variance of $\hat{v}$ to verify whether this is an efficient estimator as follows: 
\begin{equation}
   \text{var}(\hat{v})=\text{var}(\frac{h}{K})=\frac{\lambda}{K^2}=\frac{v}{K},
\end{equation}
which coincides with the CRLB of $\hat{v}$ as presented in \eqref{eq:crlb_thm} and hence, the derived estimator is an efficient estimator.
\section{Simulation Results}
\label{sec:simulation}

In this section, we first study the accuracy of the Poisson PMF approximation by plotting its mean square error (MSE) performance. Then we study the $\text{var}(\hat{v})$ w.r.t. GBS density $\lambda_{\text{GBS}}$, UAV velocity $v$, and HOC measurement time $T$. Simulation parameters are provided in Table~\ref{Tab:Sim_par}.

The MSE between the approximate PMF and the PMF obtained from simulations can be expressed as:
\begin{equation}
    \text{MSE}=\frac{1}{L}\sum_{l=1}^L \big[ f_H(l)-f_H^p(l)\big]^2,
\end{equation}
where $L$ is the number of samples in the PMFs. In Fig.~\ref{fig:MSE_vs_lambda}, we plot the MSE performance w.r.t. various $v$ and $\lambda_{\text{GBS}}$. We can conclude that our approximated PMF behave very close to the actual PMF obtained from the simulations. 

\begin{figure}[t]
\centering{\includegraphics[width=0.85\columnwidth]{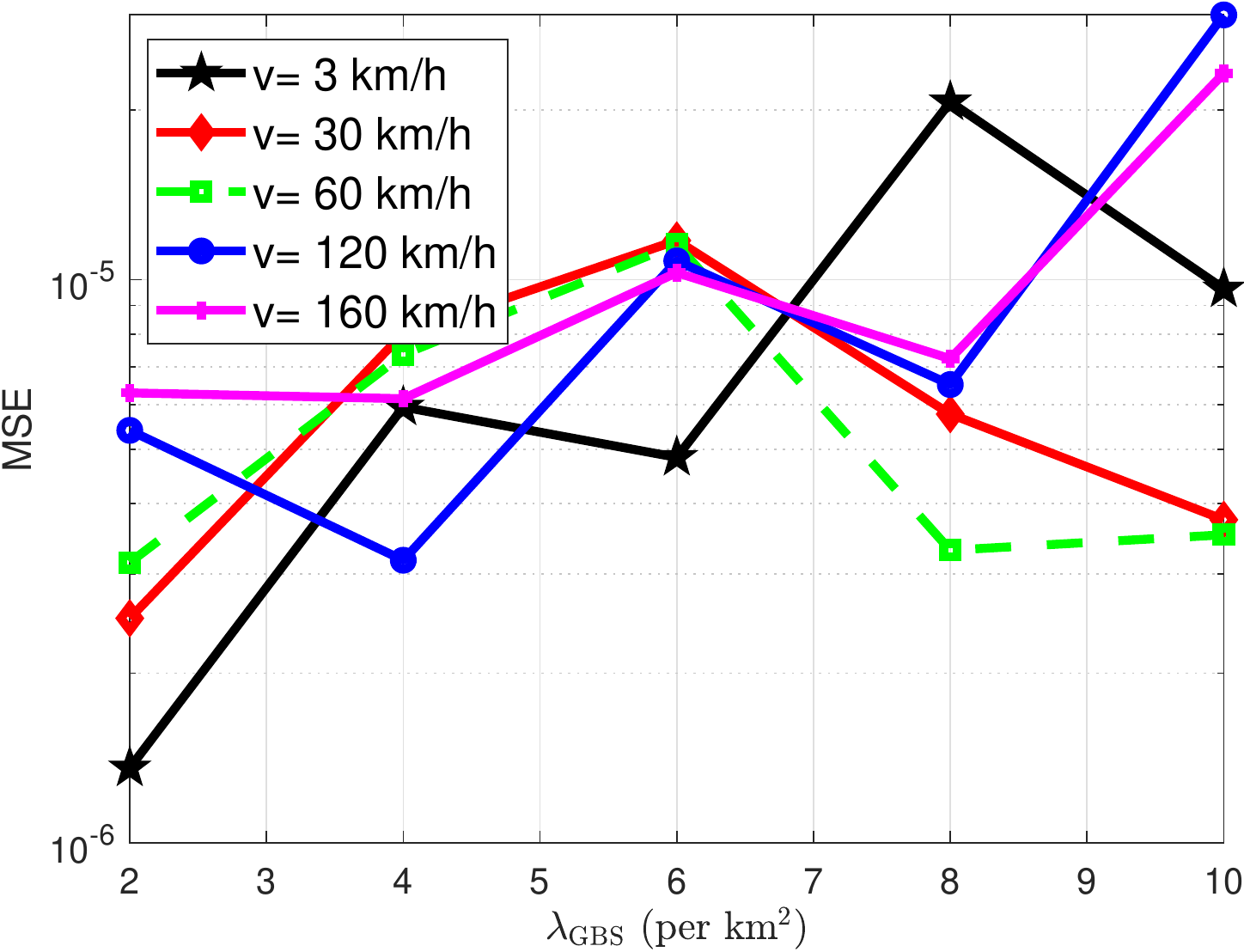}}
    \caption{MSE vs $\lambda_{\text{GBS}}$ for different $v$ and $T=100$~s.}
    \label{fig:MSE_vs_lambda}
   \vspace{-2mm}
\end{figure}

In Fig.~\ref{fig:CRLB_vs_vel}, we plot the square root of the CRLB or standard deviation of the variance of $\hat{v}$ w.r.t.  $\lambda_{\text{GBS}}$. As expected, CRLB of the proposed estimator decreases with increasing $\lambda_{\text{GBS}}$ and increases with $v$ as in \eqref{eq:crlb_thm}. This is in line with Fig.~\ref{fig:uav_160ms_possion_fit}, where the PMFs are closely spaced with each other for low $\lambda_{\text{GBS}}$, making it difficult to distinguish between different $v$. For $T=100$ s and high UAV velocities, our method provides high root mean square errors (RMSEs). However, for available commercial UAVs with maximum velocity of $68$ km/h~\cite{drone_speed}, our proposed simple estimator provides RMSEs of $28$ km/h and $24$ km/h for $\lambda_{\text{GBS}}=6$ and $\lambda_{\text{GBS}}=10$, respectively, with $T=500$~s. For UAV velocities less than $20$ km/h, UAV's velocity can estimated with RMSE less $20$ km/h with $T=500$~s.

\begin{figure}[t]
\centering{\includegraphics[width=0.85\columnwidth]{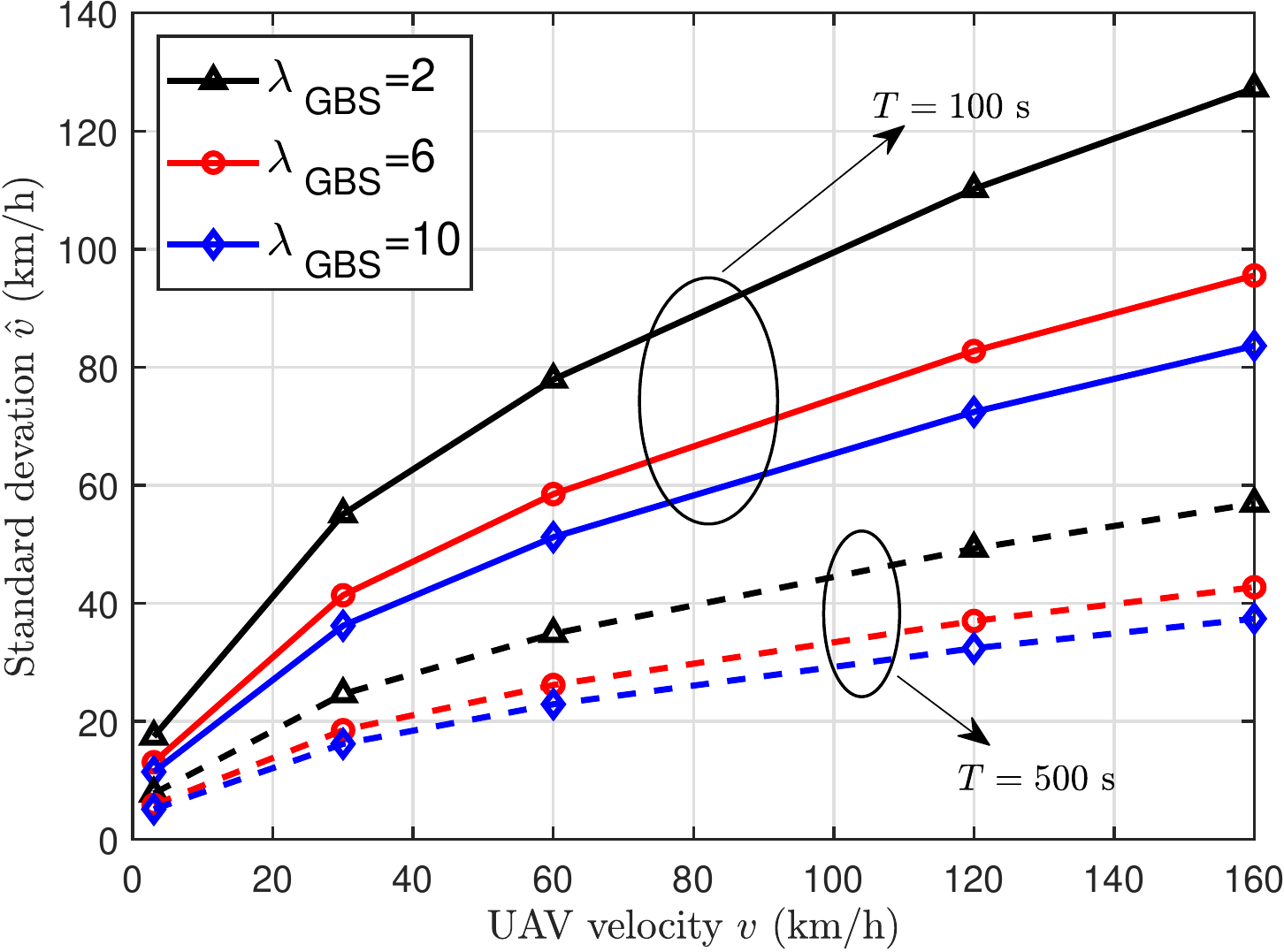}}\vspace{-1mm}
    \caption{CRLB vs $v$ for different $\lambda_{\text{GBS}}$ and $T=100$~s and $T=500$~s.}
    \label{fig:CRLB_vs_vel}
  \vspace{-1mm}
\end{figure}

In Fig.~\ref{fig:CRLB_vs_time}, we show the effect of HOC measurement time $T$ on the CRLB of our proposed estimator. For a given $\lambda_{\text{GBS}}$ and $v$, CRLB decreases with increasing $T$. As expected, higher velocity provides lower accuracy of velocity estimation. Overall, longer HOC measurement window will provide better velocity estimation since HOCs will be more distinguishable for various $v$ if we allow more time to count HOs made by the UAV traveling on a linear trajectory. There exists a trade-off between the rapidness and accuracy of the estimated velocities which is also evident in Fig.~\ref{fig:CRLB_vs_vel}.

\begin{figure}[t]
\centering{\includegraphics[width=0.85\columnwidth]{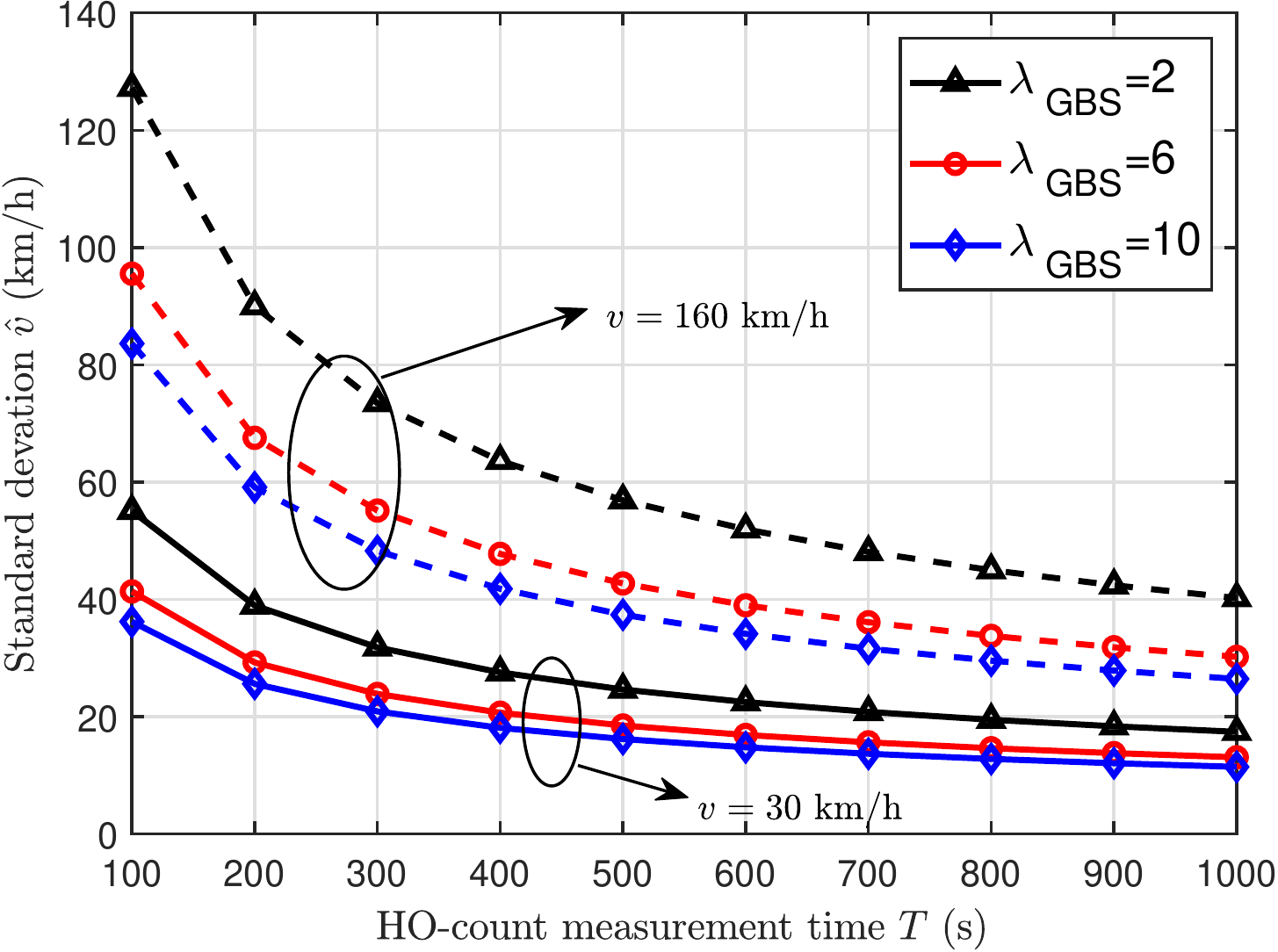}}\vspace{-1mm}
    \caption{CRLB vs $T$ for different $\lambda_{\text{GBS}}$ and $v=30~\text{km}/\text{h}$ and $v=160~\text{km}/\text{h}$.}
    \label{fig:CRLB_vs_time}
    \vspace{-2mm}
\end{figure}

\section{Conclusion}
\label{sec:Conc}
In this paper, we have derived the CRLB for HOC based velocity estimation for a cellular-connected UAV flying with constant height and velocity. We have approximated the HOC PMF using Poisson distribution and showed that the approximated PMF provides a good fit with low MSE. We have also estimated the PMF parameter as a function of GBS density, measurement duration, and UAV velocity, based on which we have proposed a simple UAV velocity estimator. Our results show that the CRLB of the estimated velocity decreases with larger HOC measurement time interval and higher GBS density. Our future work includes estimation of the UAV velocity with three-dimensional flight trajectory.

\bibliographystyle{IEEEtran} 
\bibliography{ref}
\end{document}